\newcommand{\old}[1]{{}}
\newcommand{\D}{{{\cal{D}}}}
\newcommand{\C}{{{\cal{C}}}}
\newcommand{\F}{{{\cal{F}}}}
\begin{document}
\title{Piercing Diametral Disks Induced by Edges of Maximum Spanning Tree
\thanks{This work was partially supported by Grant 2016116 from the United States -- Israel Binational Science Foundation.}}
%
%

\author{A.\,Karim Abu-Affash\inst{1} \and
Paz Carmi\inst{2} \and
Meytal Maman\inst{2}}
\authorrunning{A. K. Abu-Affash et al.}
%
\institute{Shamoon College of Engineering, Beer-Sheva 84100, Israel \\ \email{abuaa1@sce.ac.il} \and
Department of Computer Science, Ben-Gurion University, Beer-Sheva 84105, Israel \\ \email{carmip@cs.bgu.ac.il} \\ 
\email{meytal.maman@gmail.com}
}

\maketitle              
\begin{abstract}
Let $P$ be a set of points in the plane and let $T$ be a maximum-weight spanning tree of $P$.
For an edge $(p,q)$, let $D_{pq}$ be the diametral disk induced by $(p,q)$, i.e., the disk having the segment $\overline{pq}$ as its diameter. Let $\D_T$ be the set of the diametral disks induced by the edges of $T$.
In this paper, we show that one point is sufficient to pierce all the disks in $\D_T$, thus, the set $\D_T$ is Helly. Actually, we show that the center of the smallest enclosing circle of $P$ is contained in all the disks of $\D_T$, and thus the piercing point can be computed in linear time.

\keywords{Maximum spanning tree \and Piercing set \and Helly’s Theorem \and Fingerhut's Conjecture.}
\end{abstract}


\section{Introduction}
Let $P$ be a set of points in the plane and let $G=(P,E)$ be the complete graph over $P$. 
A \emph{maximum-weight spanning tree} $T$ of $P$ is a spanning tree of $G$ with maximum edge weight, where the weight of an edge $(p,q)\in E$ is the Euclidean distance between $p$ and $q$, and denoted by $|pq|$. 
For an edge $(p,q)$, let $D_{pq}$ denote the \emph{diametral} disk induced by $(p,q)$, i.e., the disk having the segment $\overline{pq}$ as its diameter.
Let $\D_T$ be the set of the diametral disks obtained by the edges of $T$, i.e., $\D_T = \{D_{pq} \ : \ (p,q) \in E_T\}$, where $E_T$ is the set of the edges of $T$. In this paper, we prove that the disks in $\D_T$ have a non-empty intersection.

\subsection{Related works} 
Let $\F$ be a set of geometric objects in the plane. A set $S$ of points in the plane \emph{pierces} $\F$ if every object in $\F$ contains a point of $S$, in this case, we say that $S$ is a \emph{piercing set} of $\F$.
The piercing problem, i.e., finding a minimum cardinality set $S$ that pierces a set of geometric objects, has attracted researchers for the past century.

A famous result is Helly's theorem~\cite{Helly23,Helly30}, which states that for a set $\F$ of convex objects in the plane, if every three objects have a non-empty intersection, then there is one point that pierces all objects in $\F$. 
The problem of piercing pairwise intersecting objects has been also studied, particularly when the objects are disks in the plane. It has been proven by Danzer~\cite{Danzer86} and by Stacho~\cite{Stacho65,Stacho814} that a set of pairwise intersecting disks in the plane can be pierced by four points. However, these proofs are involved and it seems that they can not lead to an efficient algorithm.
Recently, Har-Peled et al. ~\cite{HarPeled21} showed that every set of pairwise intersecting disks in the plane can be pierced by five points and gave a linear time algorithm for finding these points.
Carmi et al.~\cite{Carmi18} improved this result by showing that four points are always sufficient to pierce any set of pairwise intersecting disks in the plane, and also gave a linear time algorithm for finding these points.

In 1995, Fingerhut~\cite{Eppstein} conjectured that for any maximum-weight perfect matching $M = \{(a_1,b_1), (a_2,b_2), \dots, (a_n,b_n)\}$ of $2n$ points in the plane, there exists a point $c$, such that $|ca_i| + |cb_i| \le \alpha\cdot |a_ib_i|$, for every $1 \le i \le n$, where $\alpha = \frac{2}{\sqrt{3}}$. 
That is, the set of the ellipses $E_i$ with foci at $a_i$ and $b_i$, and contains all the points $x$, such that $|a_ix| + |xb_i| \le  \frac{2}{\sqrt{3}}\cdot |a_ib_i|$, have a non-empty intersection. 
Recently, Bereg et al.~\cite{Bereg19} considered a variant of this conjecture. They proved that there exists a point that pierces all the disks whose diameters are the edges of $M$. The proof is accomplished by showing that the set of disks is Helly, i.e., for every three disks there is a point in common.

\subsection{Our contribution}
A common and natural approach to prove that all the disks in $\D_T$ have a non-empty intersection is using Helly's Theorem, i.e., to show that every three disks have a non-empty intersection.
However, we use a different approach and show that all the disks in $\D_T$ have a non-empty intersection by characterizing a specific point that pierces all the disks in $\D_T$. 
More precisely, we prove the following theorem.
\begin{theorem}\label{thm:mainThm}
Let $\C^*$ be the smallest enclosing circle of the points of $P$ and let $c^*$ be its center. 
Then, $c^*$ pierces all the disks in $\D_T$.
\end{theorem}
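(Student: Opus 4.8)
The plan is to fix an edge $(p,q)\in E_T$ and prove the equivalent analytic statement that $c^*\in D_{pq}$, namely $|c^*p|^2+|c^*q|^2\le |pq|^2$: by Thales' theorem $c^*\in D_{pq}$ iff $\angle p c^* q\ge 90^\circ$, which is exactly this inequality. Two structural facts will drive the argument. First, the cut property of a maximum-weight spanning tree: deleting $(p,q)$ splits $T$ into components $A\ni p$ and $B\ni q$, and since $T$ has maximum weight, $(p,q)$ is a heaviest edge crossing the cut, so $|ab|\le |pq|$ for every $a\in A$ and $b\in B$ (otherwise $T-(p,q)+(a,b)$ would be heavier). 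Second, the defining property of $\C^*$: placing $c^*$ at the origin and letting $r^*$ be its radius, every point of $P$ is within distance $r^*$ of $c^*$, and $c^*$ lies in the convex hull of the points of $P$ on the boundary of $\C^*$; consequently, for every direction $u$ there is a boundary point $s\in P$ with $s\cdot u\le 0$.

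Given these, the key reduction is: it suffices to exhibit $a\in A$ and $b\in B$ with $|ab|^2\ge |c^*p|^2+|c^*q|^2$, since the cut property then yields $|pq|^2\ge |ab|^2\ge |c^*p|^2+|c^*q|^2$, which is precisely $c^*\in D_{pq}$. To produce such a pair I would invoke the convex-hull property twice, choosing a boundary point $s_1$ lying ``behind'' $p$ (i.e.\ $s_1\cdot p\le 0$ with $c^*$ at the origin) and a boundary point $s_2$ lying behind $q$. Writing $d_p=|c^*p|$ and $d_q=|c^*q|$, note $d_p,d_q\le r^*$, and a point behind $p$ is far from $p$: $|s_1 p|^2=(r^*)^2+d_p^2-2\,s_1\cdot p\ge (r^*)^2+d_p^2\ge d_p^2+d_q^2$, and symmetrically $|s_2 q|^2\ge d_p^2+d_q^2$. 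Hence if $s_1\in B$ we are done with $(a,b)=(p,s_1)$, and if $s_2\in A$ we are done with $(a,b)=(s_2,q)$.

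The hard part is the remaining case, where $s_1\in A$ and $s_2\in B$, so that the two far points fall on the ``wrong'' sides of the cut. To handle it I would argue by contradiction, assuming $c^*\notin D_{pq}$, so that $\angle p c^* q<90^\circ$ and the directions from $c^*$ to $p$ and to $q$ differ by less than $90^\circ$. If $s_1$ were also behind $q$ we could pair it with $q$ and finish as above, so $s_1$ is behind $p$ but in front of $q$; since the two directions are close, this pins $s_1$ into a thin angular sliver essentially opposite the bisector of $\angle p c^* q$ on one side, and symmetrically $s_2$ is pinned into the opposite sliver. Thus $s_1$ and $s_2$ are almost antipodal about $c^*$, so $\angle s_1 c^* s_2>90^\circ$ and $|s_1 s_2|^2\ge 2(r^*)^2\ge d_p^2+d_q^2$. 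As $s_1\in A$ and $s_2\in B$, the cut property gives $|pq|^2\ge |s_1 s_2|^2\ge d_p^2+d_q^2$, contradicting $c^*\notin D_{pq}$. I expect the delicate quantitative step to be exactly this last one: verifying that the $\angle p c^* q<90^\circ$ hypothesis genuinely confines $s_1,s_2$ to near-antipodal positions and delivers the clean bound $|s_1 s_2|\ge\sqrt{2}\,r^*$.
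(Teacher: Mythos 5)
Your proposal is correct, and it takes a genuinely different route from the paper. Both arguments reduce the theorem to showing $\angle p c^* q\ge \pi/2$ for every tree edge $(p,q)$, and both are exchange arguments anchored on the fact that $c^*$ lies in the convex hull of the points of $P$ on $\C^*$; but you invoke the \emph{cut} property of the maximum spanning tree (every edge of $T$ is a heaviest edge across the cut $(A,B)$ it defines), whereas the paper invokes the \emph{cycle} property (an edge that is strictly lightest on some cycle cannot belong to $T$). Concretely, the paper's Lemma~\ref{lemma:quarterDivision} guarantees boundary points on two non-adjacent arcs, and its Lemma~\ref{lemma:fartherPoint-b-a} supplies, via synthetic disk arguments, a triangle $\langle t,p,q\rangle$ or a quadrilateral $\langle t,p,q,t'\rangle$ on which $(p,q)$ is the unique shortest edge. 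You instead pick a boundary point $s_1$ with $s_1\cdot p\le 0$ and $s_2$ with $s_2\cdot q\le 0$, dispose of the easy cases where one of them lands on the far side of the cut by a one-line dot-product computation, and in the remaining case verify (correctly --- the two angular slivers $[\tfrac{\pi}{2},\tfrac{\pi}{2}+\beta)$ and $(\tfrac{3\pi}{2},\tfrac{3\pi}{2}+\beta]$, where $\beta=\angle pc^*q<\tfrac{\pi}{2}$, are antipodal to each other, so $\angle s_1c^*s_2>\tfrac{\pi}{2}$ and $|s_1s_2|\ge\sqrt{2}\,r^*$) that $s_1$ and $s_2$ themselves form a long cut edge; this last case plays the role of the paper's case with points on $A_2$ and $A_4$. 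What your approach buys is a shorter, coordinate-free-in-spirit algebraic proof with no auxiliary distance lemmas; what the paper's buys is an explicit witness cycle and pictures that make the contradiction visible. Two trivial points you should still note in a write-up: the degenerate case $p=c^*$ or $q=c^*$ (where $c^*\in D_{pq}$ immediately and the angular argument is vacuous), and the observation that $s_1\ne s_2$ in the hard case since they lie in different components.
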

This approach is even stronger since it implies a linear-time algorithm for finding the piercing point, using Megiddo's linear-time algorithm~\cite{Megiddo83} for computing the smallest enclosing circle of $P$.

The result in this paper can be considered as a variant of Fingerhut's Conjecture. That is, for a maximum-weight spanning tree (instead of a maximum-weight perfect matching) and $\alpha = \sqrt{2}$ (instead of $\alpha = \frac{2}{\sqrt{3}}$) the conjecture holds. 


\section{Preliminaries}
Let $P$ be a set of points in the plane, let $T$ be a maximum-weight spanning tree of $P$, and let $\D_T$ be the set of the diametral disks induced by the edges of $T$. 
Let $\C^*$ be the smallest enclosing circle of the points of $P$, and let $r^*$ and $c^*$ be its radius and its center, respectively. 
We assume, w.l.o.g., that $r^*=1$ and $c^*$ is located at the origin $(0,0)$. 
Let $\D^*$ be the disk having $\C^*$ as its boundary. 
Let $A_1$, $A_2$, $A_3$, and $A_4$ (resp., $Q_1$, $Q_2$, $Q_3$, and $Q_4$) be the four arcs (resp., the four quarters) obtained by dividing $\C^*$ (resp., $\D^*$) by the $x$ and the $y$-axis; see Figure~\ref{fig:Dc quarter} for an illustration.

\begin{lemma}\label{lemma:quarterDivision}
Each one of the arcs $A_1$ and $A_3$ contains at least one point of $P$ or each one of the arcs $A_2$ and $A_4$  contains at least one point of $P$.
\end{lemma}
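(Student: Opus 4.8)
The plan is to argue by contradiction, using the single defining property of the smallest enclosing circle that the proof needs: the center $c^*$ lies in the convex hull of the points of $P$ that lie on the boundary circle $\C^*$. Writing $B = P \cap \C^*$ for this set of boundary points, I would record this in the equivalent form that I will actually use: \emph{there is no open half-plane, bounded by a line through $c^*$, that contains all of $B$}. This is just the standard optimality condition for $\C^*$ (if all boundary points lay strictly on one side of a line through the center, one could translate the center toward them and shrink the circle), and it is the only fact about $\C^*$ the argument invokes.

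Next I would rewrite the negation of the lemma into a convenient combinatorial shape. The lemma fails precisely when ($A_1$ or $A_3$ is empty of points of $P$) and ($A_2$ or $A_4$ is empty). Distributing the conjunction over the two disjunctions, this is equivalent to the assertion that one of the four \emph{adjacent} pairs $\{A_1,A_2\}$, $\{A_2,A_3\}$, $\{A_3,A_4\}$, $\{A_4,A_1\}$ consists of two empty arcs (the diagonal pairs $\{A_1,A_3\}$ and $\{A_2,A_4\}$ cannot both occur alone, which is exactly the content of the lemma). Since the whole configuration is invariant under reflections and under rotation by multiples of $\pi/2$, which merely permute the four arcs, it then suffices to treat one representative case, say $A_1$ and $A_2$ both empty.

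The geometric core is then short. If $A_1$ and $A_2$, the two upper arcs together with their shared endpoints on the axes, contain no point of $P$, then the entire closed upper half of $\C^*$ is free of points of $B$; in particular $(\pm 1,0) \notin B$, so every point of $B$ has strictly negative $y$-coordinate. Hence $B$ is contained in the open half-plane $\{y<0\}$, whose bounding line (the $x$-axis) passes through the origin $c^*$. This contradicts the smallest-enclosing-circle property recorded in the first step, and the contradiction establishes the lemma.

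The step that requires the most care is the bookkeeping at the coordinate axes, and this is where I expect the only real subtlety to lie. The upgrade from ``adjacent pair of arcs empty'' to ``closed half-circle empty'' is what produces the \emph{strict} inequality $y<0$ and hence an \emph{open} half-plane; this is precisely what is needed to reach a contradiction. I would therefore make explicit the convention that the arcs are closed, so that each axis point belongs to both arcs it bounds. Without this convention the lemma is simply false: the horizontal-diameter configuration $B=\{(1,0),(-1,0)\}$ is a valid smallest enclosing circle, yet it would leave all four open arcs empty. Stating the closed-arc convention and checking that it rules out exactly this degenerate boundary case is the one point I would be sure to spell out rather than gloss over.
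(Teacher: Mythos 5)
Your proof is correct, but it takes a genuinely different route from the paper's. The paper argues constructively: it splits on whether $\C^*$ passes through exactly two points of $P$ (in which case they are antipodal, hence on non-adjacent arcs) or at least three (in which case some three of them form a triangle containing $c^*$, hence a non-obtuse inscribed triangle, and two of its vertices must lie on non-adjacent arcs). You instead contrapose: an empty adjacent pair of arcs is an empty closed half-circle, which forces all boundary points of $P$ into an open half-plane bounded by a line through $c^*$, contradicting the standard optimality condition for the smallest enclosing circle. Both arguments rest on the same underlying fact --- that $c^*$ lies in the convex hull of $P\cap\C^*$ --- but yours avoids the case split on the number of boundary points and, more importantly, the paper's terse final step (``therefore two points from $p,q,t$ are on non-adjacent arcs''), which itself needs a small additional argument ruling out all three vertices lying in one closed half-circle. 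Your explicit treatment of the closed-arc convention is also a genuine improvement rather than pedantry: the paper's two-point case silently relies on exactly that convention when it asserts the two antipodal points are ``clearly'' on non-adjacent arcs, and, as you observe, with open arcs the lemma would fail for a horizontal diameter.
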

\begin{proof}
By definition, there are at least two points of $P$ on $\C^*$. If there are exactly two points $p$ and $q$ on $\C^*$, then the segment $\overline{pq}$ is a diameter of $\C^*$, and clearly, $p$ and $q$ are on non-adjacent arcs of $\C^*$; see Figure~\ref{fig:Dc quarter}(a).
Otherwise, there are at least three points of $P$ on $\C^*$; see Figure~\ref{fig:Dc quarter}(b,c). In this case, there are three points $p$, $q$, and $t$ on $\C^*$, such that the triangle $\triangle pqt$ contains $c^*$. Thus, every angle in this triangle is acute, and therefore two points from $p,q,t$ are on non-adjacent arcs of $\C^*$.
\old{
If $D(P)$ is defined by the points $p_1$ and $p_2$ which are $D(P)$ diameter, then trivially they are in non-adjacent quadrants $Q_i$ and $Q_{i+2}$. Otherwise, the circle defined by three points, $p_1,p_2$ and $p_3$, such that the triangle $\triangle p_1p_2p_3$ contains $c^*$. Thus, every angle in this triangle is acute, and therefore two points of $a,b,c$ are in non-adjacent quadrants.
}
\end{proof}
\begin{figure}[htb]
    \centering
    \includegraphics[width=0.99\textwidth]{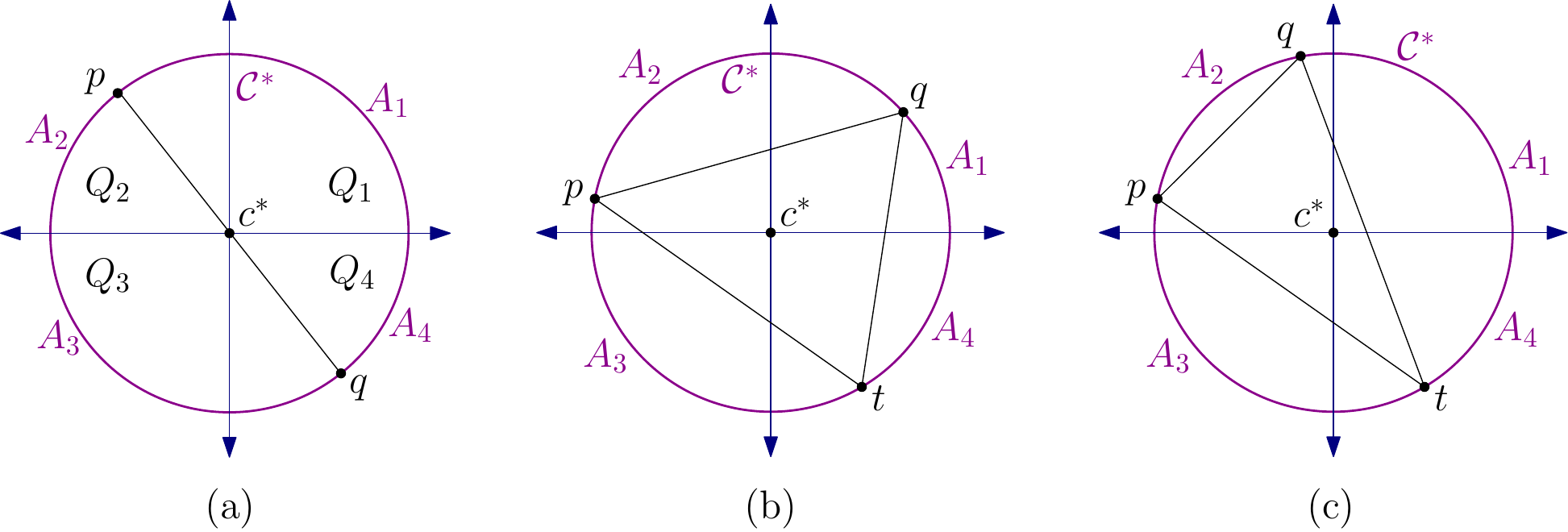}
    \caption{The smallest enclosing circle $\C^*$ of $P$. (a) Two points on $\C^*$. (b) and (c) Three points on $\C^*$.}
    \label{fig:Dc quarter}
\end{figure}


\begin{lemma}\label{lemma:fartherPoint-b-a}
Let $p$ and $q$ be two points in $Q_3$, such that $p$ is on the negative $x$-axis, the angle $\angle pc^*q<\frac{\pi}{2}$, and $|c^*p|\geq |c^*q|$; see Figure~\ref{fig:farthetLemma1}. Then,
\begin{itemize}
    \item [(i)] for every point $t$ on $A_1\cup A_2$, we have $|qt| > |pq|$,
    \item [(ii)] for every point $t$ on $A_1\cup A_4$, we have $|pt| > |pq|$, and
    \item [(iii)] for every two points $t$ and $t'$ on $A_2$ and $A_4$, respectively, we have $|tt'|>|pq|$.
\end{itemize}
\end{lemma}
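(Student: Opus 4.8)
The plan is to drop everything into coordinates and reduce each of the three inequalities to a comparison of \emph{squared} distances, which becomes tractable because $t$ and $t'$ lie on the unit circle $\C^*$. Since $c^*$ is the origin and $r^*=1$, I would write $p=(-a,0)$ with $0<a\le 1$ and $q=(-b\cos\theta,-b\sin\theta)$ with $0<b\le a\le 1$ and $\theta=\angle pc^*q\in(0,\tfrac{\pi}{2})$; here $a=|c^*p|$ and $b=|c^*q|$, so that $b\le a$ encodes $|c^*p|\ge|c^*q|$ and $a\le 1$ encodes $p\in\D^*$. By the law of cosines $|pq|^2=a^2+b^2-2ab\cos\theta$, and because $a,b>0$ and $\cos\theta>0$ this immediately gives the handy bound $|pq|^2<a^2+b^2\le 2$. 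A boundary point is written $t=(\cos\psi,\sin\psi)$.

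Parts (ii) and (iii) should fall out of crude sign estimates. For (ii), a point $t$ on $A_1\cup A_4$ (the right half) has $\cos\psi>0$, and expanding gives $|pt|^2=1+a^2+2a\cos\psi$, whence $|pt|^2>1+a^2\ge a^2+b^2>|pq|^2$, using $b\le 1$ and $\cos\theta>0$. For (iii), if $t\in A_2$ and $t'\in A_4$ then the angular gap $\psi-\phi$ lies strictly in $(\tfrac{\pi}{2},\tfrac{3\pi}{2})$, so $\cos(\psi-\phi)<0$ and $|tt'|^2=2-2\cos(\psi-\phi)>2>|pq|^2$ by the bound above.

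The real work is part (i). For $t$ on $A_1\cup A_2$ (the upper half, $\psi\in(0,\pi)$) I would compute $|qt|^2=1+b^2+2b\cos(\psi-\theta)$, so that
\[
|qt|^2-|pq|^2 = 1-a^2+2ab\cos\theta+2b\cos(\psi-\theta).
\]
When $\psi$ is in the first quadrant this is clearly positive, but the dangerous regime is $t$ near the top-left ($\psi\to\pi$, $\theta$ small), where $\cos(\psi-\theta)$ turns negative and can be as small as $-\cos\theta$. The key step is the estimate $\cos(\psi-\theta)>-\cos\theta$, valid for every $\psi\in(0,\pi)$ since the cosine over this range attains its minimum only in the limit $\psi\to\pi$; substituting and factoring yields
\[
|qt|^2-|pq|^2 > 1-a^2-2b\cos\theta\,(1-a) = (1-a)\bigl(1+a-2b\cos\theta\bigr)\ge 0,
\]
where the last inequality uses $1-a\ge 0$ together with $1+a-2b\cos\theta\ge 1+a-2b\ge 1+a-2a=1-a\ge 0$.

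The main obstacle is exactly this top-left case of (i): the loose bounds that settle (ii) and (iii) break down there, and one genuinely needs to combine \emph{both} hypotheses $|c^*q|\le|c^*p|\le r^*$ to force the factored quantity $(1-a)(1+a-2b\cos\theta)$ to be nonnegative. I would also be careful about the axis/degenerate points (e.g.\ $\psi=\pi$, or $a=1$ with $p$ on $\C^*$): the displayed inequalities are strict on the relative interiors of the arcs, and the boundary coincidences either do not occur under the strict hypothesis $\angle pc^*q<\tfrac{\pi}{2}$ or correspond to $t$ coinciding with $p$ and can be excluded.
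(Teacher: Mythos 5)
Your proof is correct, but it takes a genuinely different route from the paper's. The paper argues synthetically: for (i) it builds the disk $D_q$ centered at $q$ through the point $a=(-1,0)$, shows $|qa|>|qp|$ via the obtuse angle $\angle qpa$, and then encloses $D_q$ in a larger disk $D_{q'}$ centered on the $y$-axis whose boundary meets $\C^*$ exactly at $(\pm 1,0)$, so that the entire upper arc lies outside $D_{q'}\supseteq D_q$; part (ii) is handled symmetrically with a disk centered at $p$, and part (iii) is the same $\sqrt{2}$ observation you make. Your coordinate computation replaces the two auxiliary-disk constructions with the single estimate $\cos(\psi-\theta)>-\cos\theta$ for $\psi\in(0,\pi)$ and the factorization $(1-a)\bigl(1+a-2b\cos\theta\bigr)\ge 0$, which cleanly isolates where both hypotheses $|c^*q|\le|c^*p|\le r^*$ are actually needed; this is more self-contained and mechanically checkable, at the cost of the geometric picture (the paper's version makes it visually evident why the upper arc is ``far'' from $q$). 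Your handling of the endpoint degeneracies ($\psi=\pi$ with $a=1$, i.e.\ $t=p$) is appropriate and no worse than the paper's, where the same coincidence would make $|qt|=|qa|=|qp|$; in the application $t$ is a point of $P$ distinct from $p$ and $q$, so nothing is lost.
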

\begin{proof}
\begin{itemize}
    \item [(i)] Let $a$ and $b$ be the intersection points of $\C^*$ with the negative and the positive $x$-axis, respectively; see Figure~\ref{fig:farthetLemma1}(a).
    Let $D_q$ be the disk with center $q$ and radius $|qa|$. 
    Since $|c^*q|\le |c^*p|$, we have $\angle c^*pq \le \angle c^*qp$, and thus $\angle c^*pq \le \frac{\pi}{2}$. Hence, $\angle qpa > \frac{\pi}{2}$, and thus $|qa|>|qp|$.
    Let $q'$ be the intersection point of the line passing through $a$ and $q$ with the $y$-axis, and let $D_{q'}$ be the disk with center $q'$ and radius $|q'a|$; see Figure~\ref{fig:farthetLemma1}(a).
    Since $D_{q'}$ intersects $\C^*$ at the points $a$ and $b$, the arc $A_1\cup A_2$ is outside $D_{q'}$ (this is correct for every disk centered at a point $x$ on the negative $y$-axis and has a radius $|xa|$). Thus, for every point $t$ on $A_1\cup A_2$, we have $|q't|\ge|qa|$. Since $D_q$ is contained in $D_{q'}$, this is also correct for $D_q$. 
    Therefore, for every point $t$ on $A_1\cup A_2$, we have $|qt|\ge|qa|>|qp|$.
    \begin{figure}[htb]
    \centering
    \includegraphics[width=0.87\textwidth]{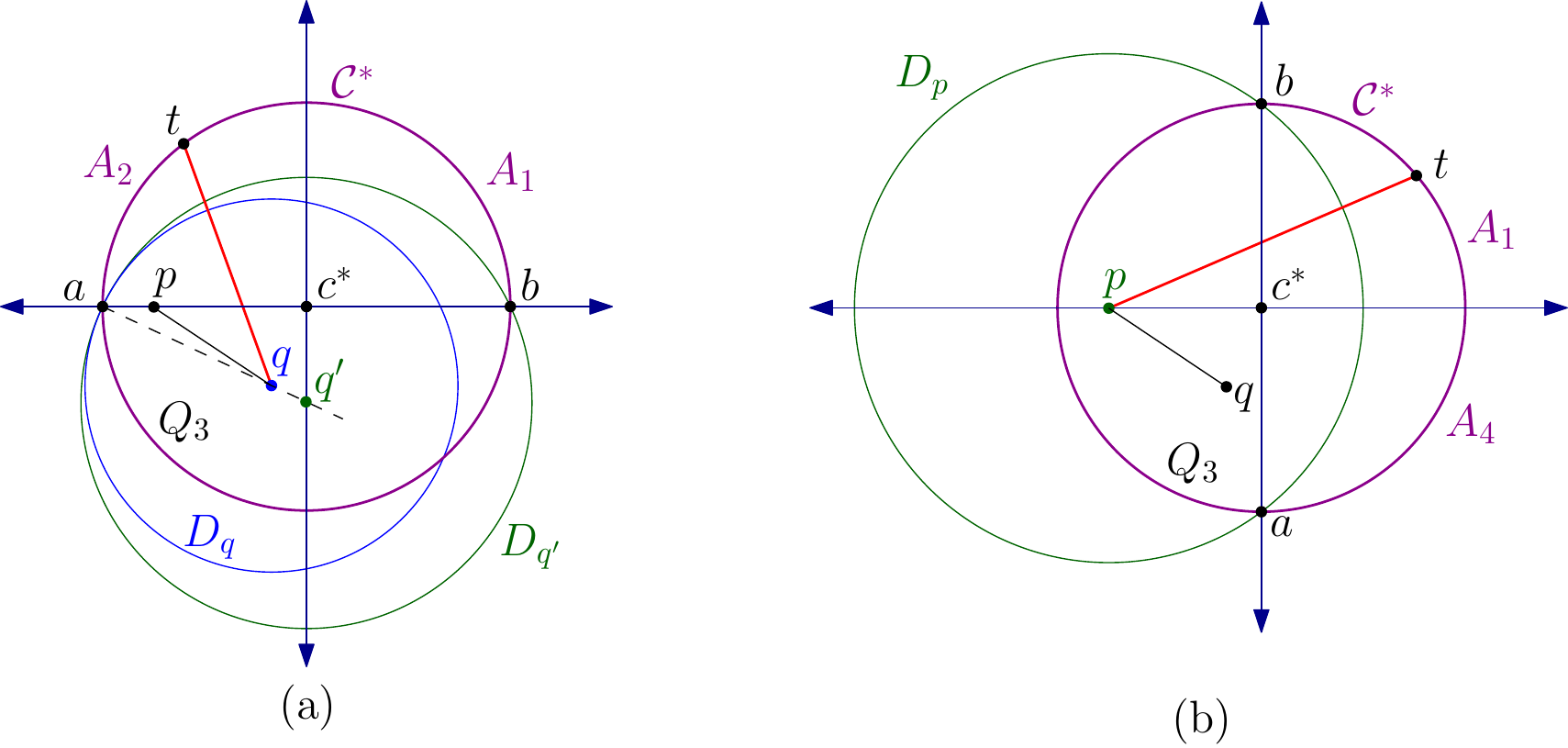}
    \caption{Illustration of the proof of Lemma~\ref{lemma:fartherPoint-b-a}. (a) Any point $t$ on $A_1\cup A_2$ satisfies $|qt| > |pq|$. (b) Any point $t$ on $A_1\cup A_4$ satisfies $|pt| > |pq|$.}
    \label{fig:farthetLemma1}
    \end{figure}
    \item[(ii)] Let $a$ and $b$ be the intersection points of $\C^*$ with the negative and the positive $y$-axis, respectively; see Figure~\ref{fig:farthetLemma1}(b). 
    Let $D_p$ be the disk centered at $p$ with radius $|pa|$. 
    Hence, $D_p$ contains $Q_3$, and thus for every point $z\in Q_3$, we have $|pz|<|pa|$, particularly $|pq|<|pa|$. 
    Since $D_{p}$ intersects $\C^*$ at the points $a$ and $b$, the arc $A_1\cup A_4$ is outside $D_p$ (this is correct for every disk centered at a point $x$ on the negative $x$-axis and has a radius $|xa|$).
    Therefore, for every point $t$ on $A_1\cup A_4$, we have $|pt|>|pa|>|pq|$.
    \item [(iii)] Since $\angle pc^*q<\frac{\pi}{2}$, we have $|pq|<\sqrt{2}$. Moreover, by the location of $t$ and $t'$, we have $|tt'|\geq\sqrt{2}$. Therefore, $|tt'|>|pq|$.
\end{itemize}
\end{proof}

Notice that Lemma~\ref{lemma:fartherPoint-b-a} holds for every two points $p$ and $q$ inside $\C^*$, such that $\angle pc^*q< \frac{\pi}{2}$. This is true since we can always rotate the points of $P$ around $c^*$ (and reflect them with respect to the $x$-axis if needed) until the farthest point from $c^*$ among $p$ and $q$ lays on the negative $x$-axis and the other point lays inside $Q_3$.  
\begin{corollary}\label{cor:farthest-a-b}
Lemma~\ref{lemma:fartherPoint-b-a} holds for every two points $p$ and $q$ inside $\C^*$, such that $\angle pc^*q< \frac{\pi}{2}$.
 
\end{corollary}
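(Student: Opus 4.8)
The plan is to obtain the corollary from Lemma~\ref{lemma:fartherPoint-b-a} purely by a reduction to its canonical configuration, using that every quantity occurring in the statement of the lemma is invariant under the isometries of the plane that fix $c^*$. Rotations about $c^*$ and reflections in lines through $c^*$ preserve all pairwise Euclidean distances, fix the center $c^*$, map $\C^*$ (and hence $\D^*$) onto itself, and preserve the central angle $\angle p c^* q$. Consequently, if I can move an arbitrary admissible pair $p,q$ into the precise position demanded by Lemma~\ref{lemma:fartherPoint-b-a} by such a transformation, the three inequalities (i)--(iii) become available for the transformed pair, and, the transformation being distance-preserving, they are exactly the inequalities claimed for the original points (with the arcs $A_1,\dots,A_4$ read off in the resulting frame).

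First I would normalize the labelling. Since the hypothesis $\angle p c^* q<\frac{\pi}{2}$ is symmetric in $p$ and $q$, I may assume $|c^*p|\ge|c^*q|$, exchanging the two names otherwise; this only records which of the two points plays the role of the farther vertex in conclusions (i) and (ii). Next I would rotate the configuration about $c^*$ until $p$ lands on the negative $x$-axis. Because the rotation preserves $\angle p c^* q$ and the ray $c^*p$ now points along direction $\pi$, the point $q$ acquires polar angle in $(\frac{\pi}{2},\frac{3\pi}{2})$; that is, $q$ lies in $Q_2\cup Q_3$.

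If $q$ already lies in $Q_3$, the pair satisfies every hypothesis of Lemma~\ref{lemma:fartherPoint-b-a} and I would apply the lemma directly. If instead $q\in Q_2$, I would reflect across the $x$-axis: this fixes the negative $x$-axis (so $p$ is undisturbed), preserves distances and $\C^*$, leaves $\angle p c^* q$ unchanged, and sends $Q_2$ to $Q_3$, so that $q$ moves into $Q_3$. In both cases the transformed pair meets the hypotheses of Lemma~\ref{lemma:fartherPoint-b-a}, which yields (i)--(iii); since the composite map is an isometry fixing $c^*$, these are genuine distance statements about the original point set, now expressed relative to the adapted axes.

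There is essentially no new geometry here beyond Lemma~\ref{lemma:fartherPoint-b-a}; the content of the corollary is exactly the observation, already stated in the remark preceding it, that the canonical configuration is always reachable. The only care the argument requires is bookkeeping: checking that the reflection is invoked in precisely the $Q_2$ case and that it keeps both $p$ and the central angle fixed, and keeping track of how the initial relabelling interacts with the asymmetric roles of $p$ and $q$ in conclusions (i) and (ii). The hard part, such as it is, is therefore only stating this reduction cleanly rather than establishing any further geometric fact.
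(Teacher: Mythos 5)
Your proposal is correct and matches the paper's own justification, which is exactly the remark preceding the corollary: rotate about $c^*$ so that the farther of the two points lies on the negative $x$-axis and reflect across the $x$-axis if needed so that the other point falls in $Q_3$, then invoke Lemma~\ref{lemma:fartherPoint-b-a} in the adapted frame. Your write-up merely spells out the bookkeeping (the relabelling so that $|c^*p|\ge|c^*q|$ and the $Q_2$-versus-$Q_3$ case for the reflection) that the paper leaves implicit.
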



\section{Proof of Theorem~\ref{thm:mainThm}}  
Let $G=(P,E)$ be the complete graph over $P$ and let $T=(P,E_T)$ be the maximum-weight spanning tree of $P$ (i.e., of $G$).
A maximum-weight spanning tree can be computed by Kruskal's algorithm~\cite{Cormen09} (or by the algorithm provided by Monma et al.~\cite{MonmaClyde1990Cems}) which uses the fact that for any cycle $C$ in $G$, if the weight of an edge $e\in C$ is less than the weight of each other edge in $C$, then $e$ cannot be an edge in any maximum-weight spanning tree of $P$. Kruskal's algorithm works as follows. It sorts the edges in $E$ in non-increasing order of their weight, and then goes over these edges in this order and adds an edge $(p,q)$ to $E_T$ if it does not produce a cycle in $T$.
Based on this fact, we prove that for every edge $(p,q)\in E_T$, the disk $D_{pq}$ contains $c^*$. 
More precisely, we prove that for each edge $(p,q)\in E_T$ the angle $\angle pc^*q$ is at least $\frac{\pi}{2}$.
\begin{lemma}\label{lemma:cycle}
For every edge $(p,q)\in E_T$, we have $\angle pc^*q\geq \frac{\pi}{2}$.
\end{lemma}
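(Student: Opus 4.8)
The plan is to argue by contradiction, exploiting the cycle characterization of the maximum-weight spanning tree recalled just above the statement: an edge that is the \emph{strict} minimum-weight edge of some cycle of $G$ cannot lie in $T$. Accordingly, I would assume that some edge $(p,q)\in E_T$ has $\angle pc^*q<\frac{\pi}{2}$, and then exhibit a cycle through $p$ and $q$ in which $(p,q)$ is strictly the shortest edge; this forces $(p,q)\notin E_T$ and gives the contradiction.

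First I would normalize the picture. Since $\angle pc^*q<\frac{\pi}{2}$, Corollary~\ref{cor:farthest-a-b} applies, so after a rotation and a possible reflection about $c^*$ (isometries that preserve $T$, the angle $\angle pc^*q$, and $\C^*$) I may assume the farther endpoint $p$ lies on the negative $x$-axis and $q$ lies in $Q_3$, placing the pair in exactly the configuration required by Lemma~\ref{lemma:fartherPoint-b-a}. Applying Lemma~\ref{lemma:quarterDivision} to the resulting (rotated) point set splits the argument into two cases: either $A_1$ and $A_3$ each carry a point of $P$, or $A_2$ and $A_4$ each do.

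In the first case I take a point $t\in A_1\cap P$. Since $t$ belongs to both $A_1\cup A_2$ and $A_1\cup A_4$, parts (i) and (ii) of Lemma~\ref{lemma:fartherPoint-b-a} give $|qt|>|pq|$ and $|pt|>|pq|$ simultaneously; hence the $3$-cycle $p,t,q$ has $(p,q)$ as its unique shortest edge, a contradiction. In the second case I take $t_2\in A_2\cap P$ and $t_4\in A_4\cap P$; parts (i), (ii), (iii) give $|qt_2|>|pq|$, $|pt_4|>|pq|$, and $|t_2t_4|>|pq|$, so the $4$-cycle $p,t_4,t_2,q$ again has $(p,q)$ as its unique shortest edge, a contradiction. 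Either way the assumption is untenable, so every edge of $T$ subtends an angle of at least $\frac{\pi}{2}$ at $c^*$.

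The heart of the argument is selecting, in each case, a cycle every one of whose remaining edges is provably longer than $|pq|$; this is precisely what the three parts of Lemma~\ref{lemma:fartherPoint-b-a} are tailored to deliver, with part (iii) doing the essential work of bridging the two far arcs $A_2$ and $A_4$ in the second case (without it, a point of $P$ confined to $A_2\cup A_4$ could not be linked to both $p$ and $q$ by long edges). The only point needing care is nondegeneracy of these cycles, i.e.\ that $p,q$ and the chosen arc points are pairwise distinct; this holds because $p$ and $q$ lie in $Q_3$ while $t\in A_1$, $t_2\in A_2$, $t_4\in A_4$ lie in $Q_1,Q_2,Q_4$ respectively, together with $p\neq q$.
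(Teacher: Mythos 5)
Your proposal is correct and matches the paper's proof essentially step for step: contradiction via the cycle property, normalization so that $p$ lies on the negative $x$-axis with $q\in Q_3$, and the same two cases (a $3$-cycle through a point of $A_1$, or a $4$-cycle through points of $A_2$ and $A_4$) built from parts (i)--(iii) of Lemma~\ref{lemma:fartherPoint-b-a}. The only cosmetic difference is that the paper splits cases on whether $A_1$ contains a point of $P$ rather than directly on the disjunction of Lemma~\ref{lemma:quarterDivision}; both versions are equivalent here.
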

\begin{proof}
Let $(p,q)$ be an edge in $E_T$. 
We show that if $\angle pc^*q<\frac{\pi}{2}$, then there is a cycle in $G$ in which the edge $(p,q)$ has the minimum weight among the edges of this cycle, and thus $(p,q)$ can not be in a maximum-weight spanning tree of $P$.
Assume towards a contradiction that $\angle pc^*q<\frac{\pi}{2}$, and assume, w.l.o.g., that $p$ and $q$ are in $Q_3$, $p$ is on the $x$-axis, and $|c^*p|>|c^*q|$.
We distinguish between two cases:
\begin{itemize}
    \item [(i)] If there is a point $t$ on $A_1$, then, by Lemma~\ref{lemma:fartherPoint-b-a}, we have $|tp|>|pq|$ and $|tq|>|pq|$. Thus, the edges  $(t,p)$, $(p,q)$, and $(q,t)$ form a cycle and the edge $(p,q)$ has a weight less than the weight of each other edge in this cycle; see Figure~\ref{fig:cycle proof}(a). This contradicts that $(p,q) \in E_T$. 
    \item [(ii)] Otherwise, by Lemma~\ref{lemma:quarterDivision}, there exist two points $t$ and $t'$ on $A_2$ and $A_4$, respectively. By Lemma~\ref{lemma:fartherPoint-b-a}, we have $|tq|>|pq|$, $|t'p|>|pq|$ and $|tt'|>|pq|$. Thus, the edges $(t,t')$, $(t',p)$, $(p,q)$, and $(q,t)$ form a cycle and the edge $(p,q)$ has a weight less than the weight of each other edge in this cycle; see Figure~\ref{fig:cycle proof}(b). This contradicts that $(p,q) \in E_T$.
\end{itemize}
\begin{figure}[htb]
    \centering
    \includegraphics[width=0.8\textwidth]{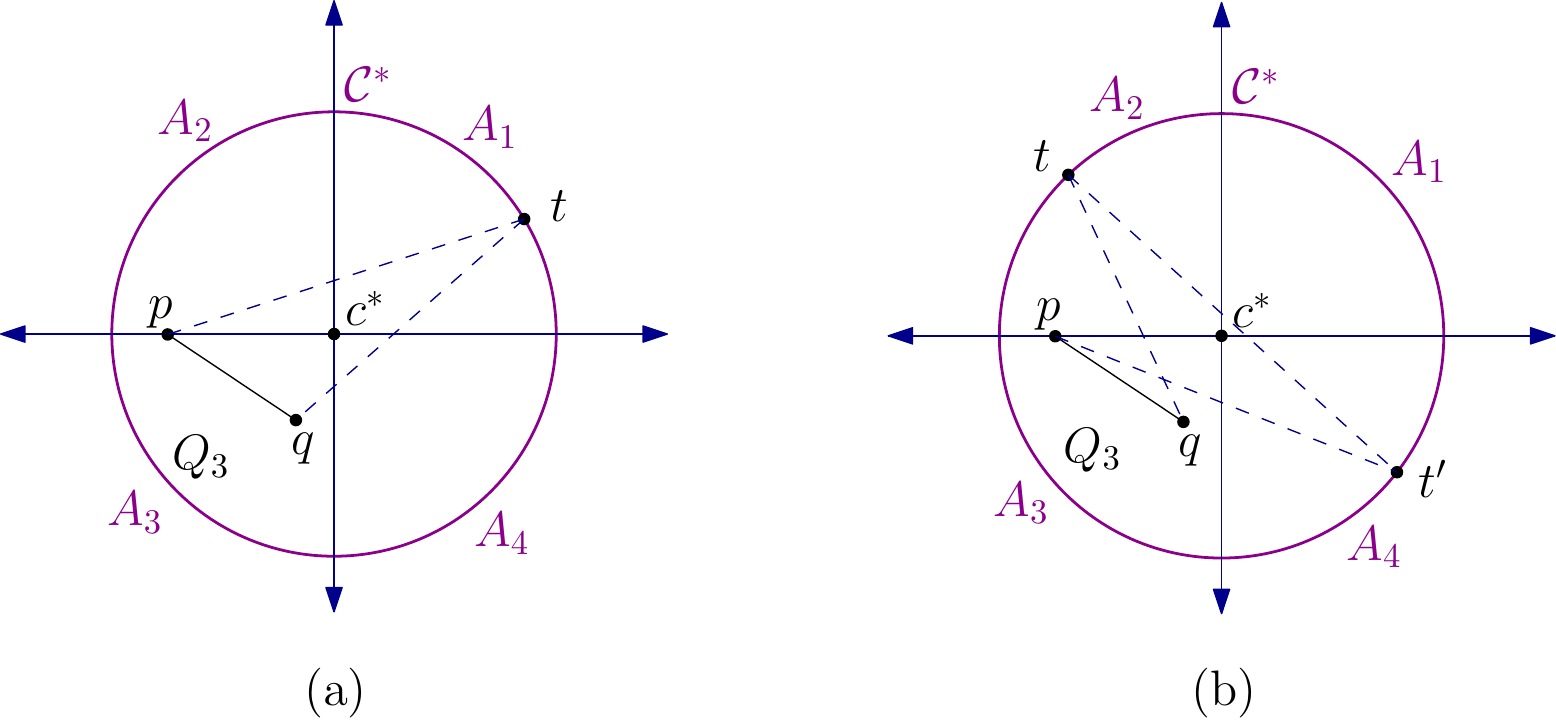}
    \caption{Illustration of the proof of Lemma~\ref{lemma:cycle}. (a) $(p,q)$ is of minimum weight in the cycle $<t,p,q>$. (b) $(p,q)$ is of minimum weight in the cycle $<t,p,q,t'>$.}
    \label{fig:cycle proof}
\end{figure}
\end{proof}

\section{Conclusion}
In this paper, we have shown that the diametral disks obtained by the edges of a maximum-weight spanning tree of a set of points $P$ have a non-empty intersection. We showed that the disks can be pierced by the center of the smallest enclosing circle of $P$, which can be computed in linear time~\cite{Megiddo83}.

Fingerhut~\cite{Eppstein} conjectured that for any maximum-weight perfect matching $M = \{(a_1,b_1), (a_2,b_2), \dots, (a_n,b_n)\}$ of $2n$ points in the plane, the set of the ellipses $E_i$ with foci at $a_i$ and $b_i$, and contains all the points $x$, such that $|a_ix| + |xb_i| \le  \alpha\cdot |a_ib_i|$, for every $1 \le i \le n$, where  $\alpha=\frac{2}{\sqrt{3}}$, have a non-empty intersection.
The smallest known value for $\alpha$ is $\alpha=\sqrt{2}$, which was provided by Bereg et al.~\cite{Bereg19}.

In this paper, we considered a variant of Fingerhut's Conjecture for maximum-weight spanning tree instead of maximum-weight perfect matching. We showed that for any maximum-weight spanning tree $T$ and $\alpha = \sqrt{2}$, there exists a point $c^*$, such that for every edge $(a,b)$ in $T$, $|c^*a| + |c^*b| \le \alpha\cdot |ab|$. In Figure~\ref{fig:MST}(a), we show an example of a maximum-weight spanning tree, such that for any $\alpha < \frac{1+\sqrt{3}}{2}$, the conjecture does not hold. This provides a lower bound on $\alpha$.
Moreover, in Figure~\ref{fig:MST}(b), we show an example of a maximum-weight spanning tree for which the center $c^*$ of the smallest enclosing circle does not satisfy the inequality for $\alpha = \frac{1+\sqrt{3}}{2}$. This means that our approach does not work for $\alpha = \frac{1+\sqrt{3}}{2}$, but does not mean that the conjecture does not hold for $\alpha = \frac{1+\sqrt{3}}{2}$.
Even though the gap between $\sqrt{2}\approx 1.414$ and $\frac{1+\sqrt{3}}{2} \approx 1.366$ is very small, it is an interesting open question to find the exact value for $\alpha$ for which the conjecture holds.
\begin{figure}[htb]
    \centering
    \includegraphics[width=0.73\textwidth]{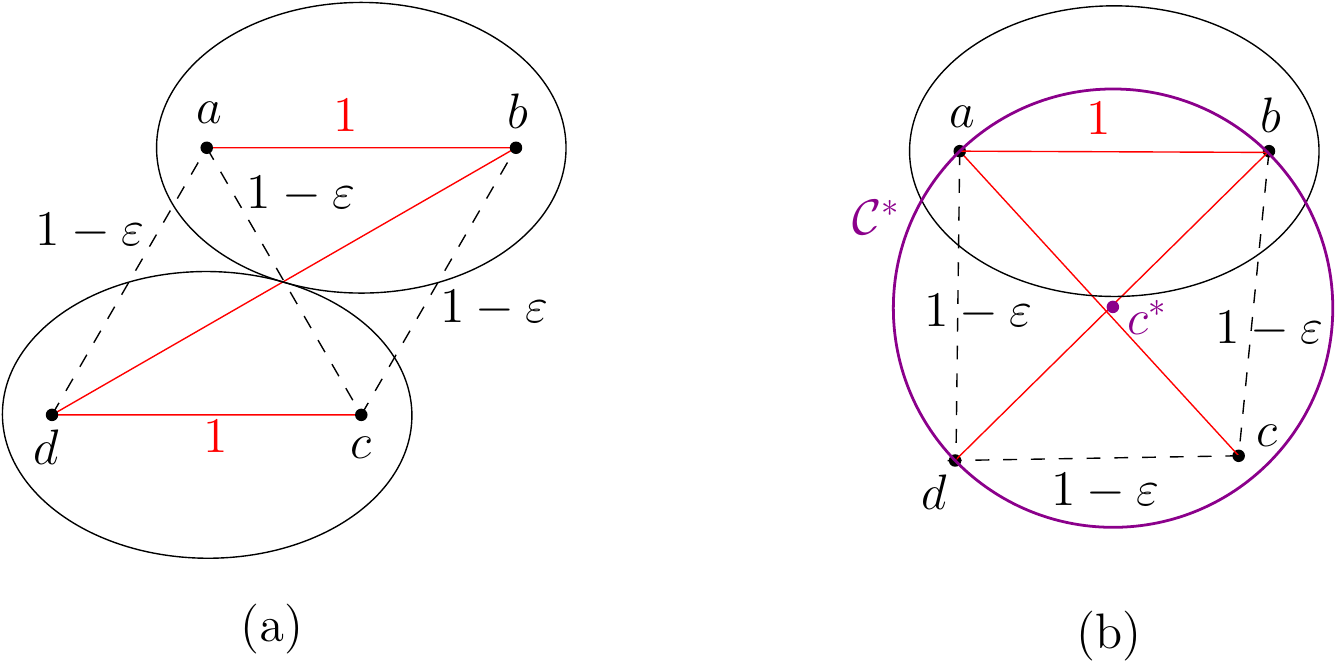}
    \caption{A maximum-weight spanning tree of the points $\{a,b,c,d\}$ (red edges) and $\alpha = \frac{1+\sqrt{3}}{2}$. (a) The ellipses defined by the edges $(a,b)$ and $(c,d)$ are tangent to each other. (b) The ellipse defined by the edge $(a,b)$ does not contain the point $c^*$.}
    \label{fig:MST}
\end{figure}

\bibliographystyle{plainurl}

\begin{thebibliography}{10}

\bibitem{Bereg19}
S.~Bereg, O.~Chac{\'{o}}n{-}Rivera, D.~Flores{-}Pe{\~{n}}aloza, C.Huemer,
  P.~P{\'{e}}rez{-}Lantero, and C.~Seara.
\newblock On maximum-sum matchings of points.
\newblock {\em CoRR}, abs/1911.10610, 2019.

\bibitem{Carmi18}
P.~Carmi, M.~J. Katz, and P.~Morin.
\newblock Stabbing pairwise intersecting disks by four points.
\newblock {\em CoRR}, abs/1812.06907, 2018.

\bibitem{Cormen09}
T.~H. Cormen, C.~E. Leiserson, R.~L. Rivest, and C.~Stein.
\newblock {\em Introduction to Algorithms, 3rd edition}.
\newblock The MIT Press, 2009.

\bibitem{Danzer86}
L.~Danzer.
\newblock Zur l\"{o}sung des {G}allaischen problems \"{u}ber kreisscheiben in
  der {E}uklidischen ebene.
\newblock {\em Studia Sci. Math. Hungar}, 21(1-2):111--134, 1986.

\bibitem{Eppstein}
D.~Eppstein.
\newblock Geometry junkyard.
\newblock URL: \url{https://www.ics.uci.edu/~eppstein/junkyard/maxmatch.html}.

\bibitem{HarPeled21}
S.~Har-Peled, H.~Kaplan, W.~Mulzer, L.~Roditty, P.~Seiferth, M.~Sharir, and
  M.~Willert.
\newblock Stabbing pairwise intersecting disks by five points.
\newblock {\em Discrete Math.}, 344(7):112403, 2021.

\bibitem{Helly23}
E.~Helly.
\newblock \"{U}ber mengen konvexer k\"{o}rper mit gemeinschaftlichen punkten.
\newblock {\em Jahresber. Dtsch. Math.-Ver.}, 32:175--176, 1923.

\bibitem{Helly30}
E.~Helly.
\newblock \"{U}ber systeme von abgeschlossenen mengen mit gemeinschaftlichen
  punkten.
\newblock {\em Monatshefte Math.}, 37(1):281--302, 1930.

\bibitem{Megiddo83}
N.~Megiddo.
\newblock Linear-time algorithms for linear programming in $\mathbb{R}^3$ and
  related problems.
\newblock {\em SIAM J. Comput.}, 12(4):759--776, 1983.

\bibitem{MonmaClyde1990Cems}
C.~Monma, M.~Paterson, S.~Suri, and F.~Yao.
\newblock Computing {E}uclidean maximum spanning trees.
\newblock {\em Algorithmica}, 5(1-4):407--419, 1990.

\bibitem{Stacho65}
L.~Stacho.
\newblock \"{U}ber ein problem f\"{u}r kreisscheiben familien.
\newblock {\em Acta Sci. Math. (Szeged)}, 26:273--282, 1965.

\bibitem{Stacho814}
L.~Stacho.
\newblock A solution of {G}allai’s problem on pinning down circles.
\newblock {\em Mat. Lapok}, 32(1-3):19--47, 1981/84.

\end{thebibliography}

\end{document}